\newtheorem{thm}{Theorem}[section]
\newtheorem{prop}[thm]{Proposition}
\theoremstyle{definition}
\journal{Journal of \LaTeX\ Templates}
\makeatletter \@addtoreset{equation}{section}
\renewcommand{\theequation}{\arabic{section}.\arabic{equation}}
\begin{document}

\begin{frontmatter}

\title{Inverse scattering transform and multi-solition solutions for the sextic nonlinear Schr\"{o}dinger equation}
\tnotetext[mytitlenote]{Project supported by the Fundamental Research Fund for the Central Universities under the grant No. 2019ZDPY07.\\
\hspace*{3ex}$^{*}$Corresponding author.\\
\hspace*{3ex}\emph{E-mail addresses}: xwu@cumt.edu.cn (X. Wu), sftian@cumt.edu.cn,
shoufu2006@126.com (S. F. Tian),
and jinjieyang@cumt.edu.cn (J.J. Yang)}

\author{Xin Wu, Shou-Fu Tian$^{*}$ and Jin-Jie Yang}
\address{
School of Mathematics and Institute of Mathematical Physics, China University of Mining and Technology, Xuzhou 221116, People's Republic of China
}

\begin{abstract}
In this work, we consider the inverse scattering transform and  multi-solition solutions  of the sextic nonlinear Schr\"{o}dinger equation. The Jost functions of spectrum problem are derived directly, and the scattering data with $t=0$ are obtained according to analyze the symmetry and other related properties of the Jost functions. Then we take use of translation transformation to get the relation between potential and kernel, and recover potential according to Gel'fand-Levitan-Marchenko (GLM) integral equations. Furthermore, the time evolution of scattering data is considered, on the basic of that, the multi-solition solutions are derived. In addition, some solutions of the equation are analyzed and revealed its dynamic behavior via graphical analysis, which could be enriched the nonlinear phenomena of the sextic nonlinear Schr\"{o}dinger equation.
\end{abstract}

\begin{keyword}
The sextic nonlinear Schr\"{o}dinger equation \sep Inverse scattering transform \sep Multi-solition solutions.
\end{keyword}

\end{frontmatter}


\section{Introduction}

It is well-known that the nonlinear Schr\"{o}dinger (NLS) equation is a nonlinear equation with soliton solutions, which plays an important role in mathematical physics. The NLS equation reads
\begin{align}
iq_{t}+\frac{1}{2}q_{tt}+|q|^{2}q=0
\end{align}
in dimensionless form. The NLS equation has been investigated more and more deeply since Zakharov and Shabat \cite{Zakharov-1972} have done important research on it. In addition to quantum mechanics, the NLS equation can also be used to describe all kinds of nonlinear waves in physics, such as the propagation of laser beam in the medium with refractive index and amplitude related, the free water wave of ideal fluid, plasma wave, etc.

However, the only basic model is not enough to study the new phenomena with the deepening of people's research on equations, especially when the wave amplitude increases. For example, in optics, one should introduce the high-order effects when considering pulses of shorter duration propagating along a fiber \cite{Chowdury-2015}-\cite{Chen-2016}.
Then a new form of NLS system is proposed which is the sextic nonlinear Schr\"{o}dinger equation \cite{Ankiewicz-2016}
\begin{gather} \label{Q1}
\begin{split}
&iq_{t}+\frac{1}{2}(q_{xx}+2|q|^{2}q)+\delta q_{xxxxxx}+\delta\left[60q^{*}|q_{x}|^{2}+50(q^{*})^{2}q_{xx}+2q^{*}_{xxxx}\right]q^{2}\\
&+\delta q\left[12q^{*}q_{xxxx}+8q_{x}q^{*}_{xxx}+22|q_{xx}|^{2}\right]
+\delta q\left[18q_{xxx}q^{*}_{x}+70(q^{*})^{2}q^{2}_{x}\right]+20\delta(q_{x})^{2}q^{*}_{xx}\\
&+10\delta q_{x}\left[5q_{xx}q^{*}_{x}+3q^{*}q_{xxx}\right]
+20\delta q^{*}q^{*}_{xx}+10\delta q^{3}\left[(q^{*}_{x})^{2}+2q^{*}q^{*}_{xx}\right]+20\delta q|q|^{6}=0,
\end{split}
\end{gather}
where $q(x,t)$ represents the complex envelops of the waves, $x$ and $t$ denote propagation distance and scaled time, the symbol `*' denotes the complex conjugation, and the real parameter $\delta$ denotes the coefficient of the sextic-order dispersion $q_{xxxxxx}$. When $\delta=0$, \eqref{Q1} will be reduced to the one-dimensional NLS equation.  Some work of the sixtic-order NLS equation have been investigated, such as breather solutions with imaginary eigenvalues and the interactions between two solitons of the sextic NLS equations \eqref{Q1} have been obtained in \cite{Sun-2017}.

It is generally known that the inverse scattering transform is a important and powerful analytical tool to solve integrable systems, which plays an indispensable role in the field of nonlinear sciences. The inverse scattering transform method was used to solve the Korteweg-de Vries (KdV) equation for the first time by Garderner, Greene, Kruskal and Miurra (GGKM) in 1967 \cite{Gardner-1967}, and optimized by Lax raised according to GGKM and form systematic inverse scatting transform method in 1968 \cite{Lax-1968}. The work of Zakharov and Shabat on the nonlinear Schr\"{o}dinger equation in 1972 \cite{Zakharov-1972} pushed the inverse scattering theory to a more general study. Since then, the inverse scattering method has been developed and extended to many excellent study and results \cite{1}-\cite{20}.
However, according to what we know, the inverse scatting transform of the sextic nonlinear Schr\"{o}dinger equation \eqref{Q1}   has not been studied. Therefore, the main purpose of this work is to find its inverse scatting transform, more abundant soliton solutions, and reveal the propagation behavior of the solutions.

The structure of this work is given as follows. In section 2, we derive the relevant Jost functions by analyzing the spectrum problem of the equation in detail. On the basis of the Jost functions, we get the scattering data when time is ignored temporarily. In section 3, GLM integral equations are established based on the integral kernel and the related definitions. The relationship among  some scattering data considering time evolution are also derived. Furthermore, we obtain the multi-soliton solutions according to the above results. In section 4, one-  and two- soliton solution are given, and their dynamic behaviors are analyzed in the form of figures. Finally, some conclusions and discussions are presented synoptically in the final section.

\section{Direct scattering transform}

In this section, the direct scattering transform of \eqref{Q1} will be studied and the scattering data can be obtained fanally. The Lax pair of the sextic nonlinear Schr\"{o}dinger equation reads
\addtocounter{equation}{1}
\begin{align}
&\psi_{x}=U\psi,\tag{\theequation a}\label{Q2}\\
&\psi_{t}=V\psi,\tag{\theequation b}
\end{align}
here $U$ and $V$ can be expressed in the form of
\begin{align*}
U=i\left(\begin{array}{cc}
    \lambda  & q^{*} \\
    q & -\lambda  \\
  \end{array}\right),
V=\sum_{c=0}^{6}i\lambda^{c}\left(\begin{array}{cc}
    A_{c} & B^{*}_{c} \\
    B_{c} & -A_{c}  \\
 \end{array}\right),
\end{align*}
with
\begin{align*}
A_{0}=&-\frac{1}{2}|q|^{2}-10\delta|q|^{6}-5\delta\left[q^{2}_{x}(q^{*})^{2}\right]
-10\delta|q|^{2}\left(q_{xx}q^{*}+q^{*}_{xx}q\right)-\delta|q_{xx}|^{2}\\
&+\delta\left(q_{x}q^{*}_{xxx}-q^{*}q_{xxxx}-qq^{*}_{xxxx}\right),\\
A_{1}=&12i\delta|q|^{2}\left(q_{x}q^{*}-q^{*}_{x}q\right)
+2i\delta\left(q_{x}q^{*}_{xx}-q^{*}_{x}q_{xx}+q^{*}q_{xxx}-q^{*}_{xxx}q\right),\\
A_{2}=&1+12\delta|q|^{4}-4\delta|q_{x}|^{2}+4\delta\left(q^{*}_{xx}q+q_{xx}q^{*}\right), A_{3}=8i\delta\left(qq^{*}_{x}-q^{*}q_{x}\right),\\
 A_{4}=&-16\delta|q|^{2}, A_{5}=0, A_{6}=32\delta,\\
B_{0}=&\frac{i}{2}q_{x}+i\delta q_{xxxxx}+10i\delta\left(qq^{*}_{x}q_{xx}+qq^{*}_{xx}q_{x}
+|q|^{2}q_{xxx}+3|q|^{4}q_{x}+q_{x}|q_{x}|^{2}2q^{*}q_{x}q_{xx}\right),\\
B_{1}=&q+12\delta q^{*}q^{2}_{x}+16\delta|q|^{2}q_{xx}+4\delta q^{2}q^{*}_{xx}+2\delta q_{xxxx}
+12\delta|q|^{4}q+8\delta q|q_{x}|^{2},\\
B_{2}=&-24i\delta|q|^{2}q_{x}-4i\delta q_{xxx}, B_{3}=-16\delta|q|^{2}q-8\delta q_{xx},\\
B_{4}=&16i\delta q_{x},
B_{5}=32\delta q,
B_{6}=0,
\end{align*}
where $\psi=\left(\psi_{1}(x,t),\psi_{2}(x,t)\right)^{T}$, and $\lambda$ is a  spectral parameter. Eq. \eqref{Q1} satisfies zero curvature equation $U_{t}-V_{x}+[U,V]=0$, which is the compatibility condition of the Lax pair (2.1).

According to the first expression of Lax pair, we can obtain the spectral problem as the following form
\begin{align}\label{Q3}
\left\{\begin{aligned}
&\psi_{1,x}(x,\lambda)=i\lambda\psi_{1}(x,\lambda)+q^{*}\psi_{2}(x,\lambda),\\
&\psi_{2,x}(x,\lambda)=-i\lambda\psi_{2}(x,\lambda)+q\psi_{1}(x,\lambda).
\end{aligned}\right.
\end{align}
Due to $q(x)$ belongs to Schwarz space, $q(x)$ and its derivatives with respect to $x$ approaches to zero rapidly when $|x|\rightarrow \infty$.

\subsection{Jost functions}

Because $t$ is a fixed variable, it is temporarily omitted.
\begin{thm}
The equations (2.2) have Jost functions which have the following integral forms
\addtocounter {equation}{1}
\begin{align}
&\phi_{1}(x,\lambda)=\int^{x}_{-\infty}e^{i\lambda(x-y)}q^{*}(y)\phi_{2}(y,\lambda)dy,
\tag{\theequation a}\\
&\phi_{2}(x,\lambda)=e^{-i\lambda x}
+\int^{x}_{-\infty}e^{-i\lambda(x-y)}q(y)\phi_{1}(y,\lambda)dy,\tag{\theequation b}\\
&\bar{\phi}_{1}(x,\lambda)=e^{i\lambda x}
+\int^{x}_{-\infty}e^{i\lambda(x-y)}q^{*}(y)\bar{\phi}_{2}(y,\lambda)dy,\tag{\theequation c}\\
&\bar{\phi}_{2}(x,\lambda)=\int^{x}_{-\infty}e^{-i\lambda(x-y)}q^{(y)}\bar{\phi}_{1}(y,\lambda)dy,
\tag{\theequation d}
\end{align}
and
\addtocounter {equation}{1}
\begin{align}
&\psi_{1}(x,\lambda)=e^{i\lambda x}
-\int^{\infty}_{x}e^{i\lambda(x-y)}q^{*}(y)\psi_{2}(y,\lambda)dy,\tag{\theequation a}\label{a}\\
&\psi_{2}(x,\lambda)=
-\int^{\infty}_{x}e^{-i\lambda(x-y)}q(y)\psi_{1}(y,\lambda)dy,\tag{\theequation b}\label{b}\\
&\bar{\psi}_{1}(x,\lambda)=
-\int^{\infty}_{x}e^{i\lambda(x-y)}q^{*}(y)\bar{\psi}_{2}(y,\lambda)dy,\tag{\theequation c}\\
&\bar{\psi}_{2}(x,\lambda)=e^{-i\lambda x}
-\int^{\infty}_{x}e^{-i\lambda(x-y)}q(y)\bar{\psi}_{1}(y,\lambda)dy.\tag{\theequation d}
\end{align}
\end{thm}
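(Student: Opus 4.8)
The plan is to characterize each Jost function by its prescribed asymptotic behavior as $x\to\mp\infty$ and then convert the linear system (2.2) into an equivalent Volterra integral equation whose unique bounded solution is exactly the claimed representation. Concretely, I would fix the boundary data so that $(\phi_{1},\phi_{2})^{T}\to(0,e^{-i\lambda x})^{T}$ and $(\bar{\phi}_{1},\bar{\phi}_{2})^{T}\to(e^{i\lambda x},0)^{T}$ as $x\to-\infty$, while $(\psi_{1},\psi_{2})^{T}\to(e^{i\lambda x},0)^{T}$ and $(\bar{\psi}_{1},\bar{\psi}_{2})^{T}\to(0,e^{-i\lambda x})^{T}$ as $x\to+\infty$; these are precisely the asymptotics encoded by the inhomogeneous terms $e^{\pm i\lambda x}$ appearing in (2.3) and (2.4). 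The rapid decay of $q$ guaranteed by the Schwarz-space hypothesis ensures each integral is well defined.

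First I would strip off the free oscillation by integrating factors. Writing the first line of (2.2) as $(e^{-i\lambda x}\psi_{1})_{x}=e^{-i\lambda x}q^{*}\psi_{2}$ and the second as $(e^{i\lambda x}\psi_{2})_{x}=e^{i\lambda x}q\,\psi_{1}$, I integrate each from the relevant endpoint to $x$. For the pair $\phi$ I integrate from $-\infty$: since $\phi_{1}\to0$ the lower boundary term drops and I obtain $\phi_{1}(x)=\int_{-\infty}^{x}e^{i\lambda(x-y)}q^{*}(y)\phi_{2}(y)\,dy$, which is (2.3a); since $e^{i\lambda y}\phi_{2}(y)\to1$, the second relation integrates to $\phi_{2}(x)=e^{-i\lambda x}+\int_{-\infty}^{x}e^{-i\lambda(x-y)}q(y)\phi_{1}(y)\,dy$, which is (2.3b). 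The same computation, now integrating from $+\infty$ and tracking which component carries the nonzero boundary value, reproduces (2.3c,d) for $\bar{\phi}$ and (2.4a--d) for $\psi$ and $\bar{\psi}$; the minus sign in front of the $\int_{x}^{\infty}$ integrals in (2.4) simply records that the endpoint of integration has been swapped. Conversely, differentiating the integral representations recovers (2.2), so the two formulations are equivalent.

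To show the Jost functions actually exist, I would treat each coupled pair as a single vector Volterra equation $\Phi=\Phi_{0}+\mathcal{K}\Phi$, where $\Phi_{0}$ is the free term and $\mathcal{K}$ is the integral operator with the exponential--potential kernel, and solve it by the Neumann series $\Phi=\sum_{n\ge0}\mathcal{K}^{n}\Phi_{0}$. For real $\lambda$ the exponential factors are unimodular, so each application of $\mathcal{K}$ contributes a factor $\int|q|$; the triangular (Volterra) structure then bounds the $n$-th iterate by a term of order $\tfrac{1}{n!}\big(\int_{-\infty}^{x}|q(y)|\,dy\big)^{n}$, whence the series converges absolutely and uniformly on $x$-half-lines. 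This simultaneously establishes existence and uniqueness of the bounded solutions and legitimizes the interchange of summation and integration.

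The main obstacle is not the formal derivation, which is routine integration by parts, but the convergence bookkeeping: one must verify that the boundary contributions at $\mp\infty$ genuinely vanish (this is exactly where the Schwarz decay of $q$ and its derivatives is needed) and that the iterated kernels stay controllable. If one further wants these representations to persist off the real axis --- as is required later for the analyticity of the scattering data --- the delicate point becomes the sign of $\mathrm{Im}\,\lambda$ over the integration region: each $\phi$ or $\psi$ remains bounded only in the half-plane where the corresponding exponential decays along the path of integration, so the estimate above must be carried out half-plane by half-plane.
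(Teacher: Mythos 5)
Your proposal follows essentially the same route as the paper: reduce the system to Volterra integral equations and solve them by successive approximation (Neumann series), with a factorial bound on the iterates ensuring absolute and uniform convergence on the half-plane where the exponential kernel satisfies $|e^{-2i\lambda(y-z)}|\leq 1$ --- exactly the paper's argument, which bounds the $j$-th iterate by $\frac{1}{j!}\left(Q(x)Q^{*}(x)\right)^{j}$ and sums the series to a Bessel-type function. The differences are presentational rather than substantive: you additionally derive the integral equations from (2.2) via integrating factors and the prescribed asymptotics, whereas the paper instead writes out in full what you compress into single assertions, namely a Gronwall-style argument for uniqueness and the term-by-term differentiability of the iterates in $x$ and $\lambda$ needed to confirm that the integral representations genuinely solve the differential system.
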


\begin{proof}
Here we only prove \eqref{a} and \eqref{b}, others can be similarly proved.

\textbf{First, we prove the existence of the Jost functions.}

To begin with, we make a change as follows
\begin{align}
f(x,\lambda)=e^{-i\lambda x}\psi(x,\lambda),
\end{align}
then \eqref{a} and \eqref{b} can be changed into
\begin{align}
&f_{1}(x,\lambda)=1-\int^{\infty}_{x}q^{*}(y)f_{2}(y,\lambda)dy,\\
&f_{2}(x,\lambda)=-\int^{\infty}_{x}e^{-2i\lambda (x-y)}q(y)f_{1}(y,\lambda)dy.\label{p2}
\end{align}
Eliminating the function $f_{2}$  from $f_{1}$, we can get
\begin{align}\label{p1}
f_{1}(x,\lambda)=1+\int^{\infty}_{x}q^{*}(y)
\left(\int^{\infty}_{y}e^{-2i\lambda (y-z)}q(z)f_{1}(z,\lambda)dz\right)dy.
\end{align}
Using successive approximation method yields
\begin{align}
&f^{(0)}_{1}(x,\lambda)=1,\\
&f^{(j)}_{1}(x,\lambda)=\int^{\infty}_{x}q^{*}(y)
\left(\int^{\infty}_{y}e^{-2i\lambda (y-z)}q(z)f^{(j-1)}_{1}(z,\lambda)dz\right)dy,\quad
j=1,2,\ldots.
\end{align}

Introducing
\begin{align*}
Q(x)=\int^{\infty}_{x}|q(y)|dy, \quad Q^{*}(x)=\int^{\infty}_{x}|q^{*}(y)|dy=Q(x),
\end{align*}
when $Im \lambda\geq0$, we have
\begin{align}
|e^{-2i\lambda (y-z)}|=|e^{2(y-z)Im\lambda-2i(y-z)Re\lambda}|=e^{2(y-z)Im\lambda}\leq1.
\end{align}
Based on the above conclusions, when $j=1$ we have
\begin{align*}
|f^{(1)}_{1}(x,\lambda)|
&=\left|\int^{\infty}_{x}q^{*}(y)\left(\int^{\infty}_{y}e^{-2i\lambda (y-z)}q(z)dz\right)dy\right|\\
&\leq\int^{\infty}_{x}\left|q^{*}(y)\right|\left(\int^{\infty}_{y}\left|e^{-2i\lambda (y-z)}\right|\left|q(z)\right|dz\right)dy\\
&\leq\int^{\infty}_{x}\left|q^{*}(y)\right|\left(\int^{\infty}_{y}\left|q(z)\right|dz\right)dy\\
&\leq\int^{\infty}_{x}\left|q^{*}(y)\right|dy\int^{\infty}_{x}\left|q(z)\right|dz\\
&=Q(x)Q^{*}(x).
\end{align*}
On the same way, we can see when $j=2$
\begin{align*}
|f^{(2)}_{1}(x,\lambda)|
&=\left|\int^{\infty}_{x}q^{*}(y)
\left(\int^{\infty}_{y}e^{-2i\lambda (y-z)}q(z)f^{(1)}_{1}(z,\lambda)dz\right)dy\right|\\
&\leq\int^{\infty}_{x}\left|q^{*}(y)\right|
\left(\int^{\infty}_{y}\left|q(z)\right|
\left(\int^{\infty}_{z}\left|q^{*}(s)\right|ds\int^{\infty}_{z}\left|q(s)\right|ds\right)
dz\right)dy\\
&\leq\int^{\infty}_{x}\left|q^{*}(y)\right|\int^{\infty}_{y}\left|q^{*}(s)\right|ds
\left(\int^{\infty}_{y}\left|q(z)\right|\left(\int^{\infty}_{z}\left|q(s)\right|ds\right)
dz\right)dy\\
&=\int^{\infty}_{x}\left|q^{*}(y)\right|\int^{\infty}_{y}\left|q^{*}(s)\right|ds
\left(\int^{\infty}_{y}-\frac{1}{2}\frac{d}{dz}
\left(\int^{\infty}_{z}\left|q(s)\right|ds\right)^{2}dz\right)dy\\
&=\frac{1}{2}\int^{\infty}_{x}\left|q^{*}(y)\right|\int^{\infty}_{y}\left|q^{*}(s)\right|ds
\left(\int^{\infty}_{y}\left|q(s)\right|ds\right)^{2}dy\\
&\leq\frac{1}{2}\left(\int^{\infty}_{x}\left|q(s)\right|ds\right)^{2}
\int^{\infty}_{x}\left|q^{*}(y)\right|\int^{\infty}_{y}\left|q^{*}(s)\right|dsdy\\
&=\frac{1}{2}\left(\int^{\infty}_{x}\left|q(s)\right|ds\right)^{2}
\frac{1}{2}\left(\int^{\infty}_{x}\left|q^{*}(s)\right|ds\right)^{2}\\
&=\frac{1}{2!^{2}}\left(Q(x)Q^{*}(x)\right)^{2}.
\end{align*}
Then we suppose the conclusion still holds when $j-1$, i.e.,
\begin{align*}
|f^{(j-1)}_{1}(x,\lambda)|=\frac{1}{(j-1)!^{2}}\left(Q(x)Q^{*}(x)\right)^{j-1},
\end{align*}
therefore
\begin{align*}
|f^{(j)}_{1}(x,\lambda)|
&=\left|\int^{\infty}_{x}q^{*}(y)
\left(\int^{\infty}_{y}e^{-2i\lambda (y-z)}q(z)f^{(j-1)}_{1}(z,\lambda)dz\right)dy\right|\\
&\leq\int^{\infty}_{x}\left|q^{*}(y)\right|\left(\int^{\infty}_{y}\left|q(z)\right|
\frac{1}{(j-1)!^{2}}\left(Q(z)Q^{*}(z)\right)^{j-1}dz\right)dy\\
&\leq\int^{\infty}_{x}\left|q^{*}(y)\right|\frac{Q^{*,j-1}(y)}{(j-1)!}
\left(\int^{\infty}_{y}\left|q(z)\right|\frac{Q^{j-1}(z)}{(j-1)!}dz\right)dy\\
&=\int^{\infty}_{x}\left|q^{*}(y)\right|\frac{Q^{*,j-1}(y)}{(j-1)!}
\left(\int^{\infty}_{y}\frac{d}{dz}\left(-\frac{1}{j!}Q^{j}(z)\right)dz\right)dy\\
&=\frac{1}{j!}\int^{\infty}_{x}\left|q^{*}(y)\right|\frac{Q^{*,j-1}(y)}{(j-1)!}Q^{j}(y)dy\\
&\leq\frac{1}{j!}Q^{j}(x)\int^{\infty}_{x}\left|q^{*}(y)\right|\frac{Q^{*,j-1}(y)}{(j-1)!}dy\\
&=\frac{1}{j!}Q^{j}(x)\frac{1}{j!}Q^{*,j}(x)\\
&=\frac{1}{j!}\left(Q(x)Q^{*}(x)\right)^{j}.
\end{align*}
So far, we can get by summarizing the above process
\begin{align}\label{p3}
|f^{(j)}_{1}(x,\lambda)|\leq\frac{1}{j!}\left(Q(x)Q^{*}(x)\right)^{j}
=\frac{1}{j!}\left(Q(x)\right)^{2j}.
\end{align}
Noticing
\begin{align*}
|f_{1}(x,\lambda)|
&=|f^{(0)}_{1}(x,\lambda)+f^{(1)}_{1}(x,\lambda)+\cdots+f^{(j)}_{1}(x,\lambda)+\cdots|\\
&\leq|f^{(0)}_{1}(x,\lambda)|+|f^{(1)}_{1}(x,\lambda)|+\cdots+|f^{(j)}_{1}(x,\lambda)|+\cdots\\
&=1+Q^{2}(x)+\frac{1}{2!^{2}}Q^{4}(x)+\cdots+\frac{1}{j!^{2}}Q^{2j}(x)+\cdots\\
&=\sum^{\infty}_{j=0}\frac{1}{j!^{2}}Q^{2j}(x)\\
&=\sum^{\infty}_{j=0}\frac{1}{j!\Gamma(j+0+1)}\left(\frac{2Q(x)}{2}\right)^{2j+0}\\
&=Bessel_{0}(2Q(x)).
\end{align*}
According to the properties of Bessel function, it is not difficult to verify that it is convergent, thus we can see $f_{1}(x,\lambda)$ absolutely and uniformly convergent.

On the other hand, we have
\begin{align*}
\sum^{\infty}_{j=0}f^{(j)}_{1}(x,\lambda)
=1+&\int^{\infty}_{x}q^{*}(y)\left(\int^{\infty}_{y}e^{-2i\lambda (y-z)}q(z)dz\right)dy\\
+&\int^{\infty}_{x}q^{*}(y)
\left(\int^{\infty}_{y}e^{-2i\lambda (y-z)}q(z)f^{(1)}_{1}(z,\lambda)dz\right)dy+\cdots\\
+&\int^{\infty}_{x}q^{*}(y)
\left(\int^{\infty}_{y}e^{-2i\lambda (y-z)}q(z)f^{(j-1)}_{1}(z,\lambda)dz\right)dy+\cdots\\
=1+&\int^{\infty}_{x}q^{*}(y)\left(\int^{\infty}_{y}e^{-2i\lambda (y-z)}q(z)
\left(1+f^{(1)}_{1}+\cdots+f^{(j-1)}_{1}+\cdots\right)dz\right)dy\\
=1+&\int^{\infty}_{x}q^{*}(y)\left(\int^{\infty}_{y}e^{-2i\lambda (y-z)}q(z)
\sum^{\infty}_{j=0}f^{(j)}_{1}(x,\lambda)dz\right)dy,
\end{align*}
that is to say, the function $f_{1}(x,\lambda)$ is a solution of \eqref{p1}. According to \eqref{p2}, we can calculate $f_{2}(x,\lambda)$, so the existence of Jost functions can be proved.

\textbf{Second, we prove the uniqueness of the Jost functions.}

Similarly to the above way, we only need to prove the uniqueness of $f_{1}(x,\lambda)$.

Assuming \eqref{p1} has another solution $\bar{f}_{1}(x,\lambda)$, we have
\begin{align*}
\left|f_{1}(x,\lambda)-\bar{f}_{1}(x,\lambda)\right|
&=\left|\int^{\infty}_{x}q^{*}(y)\left(\int^{\infty}_{y}e^{-2i\lambda (y-z)}q(z)\left(f_{1}(z,\lambda)-\bar{f}_{1}(x,\lambda)\right)dz\right)dy\right|\\
&\leq\int^{\infty}_{x}\left|q^{*}(y)\right|\left(\int^{\infty}_{y}\left|q(z)\right|
\left|\left(f_{1}(z,\lambda)-\bar{f}_{1}(x,\lambda)\right)\right|dz\right)dy\\
&\leq\int^{\infty}_{x}\left|q^{*}(y)\right|dy\int^{\infty}_{x}\left|q(z)\right|
\left|\left(f_{1}(z,\lambda)-\bar{f}_{1}(x,\lambda)\right)\right|dz\\
&\leq Q(x)\int^{\infty}_{x}\left|q(z)\right|
\left|\left(f_{1}(z,\lambda)-\bar{f}_{1}(x,\lambda)\right)\right|dz.
\end{align*}
Because of $Q(x)=\int^{\infty}_{x}\left|q^{*}(y)\right|dy\leq\int^{\infty}_{-\infty}\left|q^{*}(y)\right|dy$ and let $Q=\int^{\infty}_{-\infty}\left|q^{*}(y)\right|dy$,
then we introduce
\begin{align*}
&F(x,\lambda)=f_{1}(z,\lambda)-\bar{f}_{1}(x,\lambda),\\
&G(x,\lambda)=Q\int^{\infty}_{x}\left|q(z)\right|\left|F(z,\lambda)\right|dz,
\end{align*}
thus
\begin{align}\label{p2}
F(x,\lambda)\leq G(x,\lambda).
\end{align}
Let's take the derivative of two sides of $G(x,\lambda)$ about $x$, i.e.,
\begin{align*}
G_{x}(x,\lambda)=-Q\left|q(x)\right|\left|F(x,\lambda)\right|,
\end{align*}
compared with \eqref{p2},
\begin{align*}
G_{x}(x,\lambda)\geq-Q\left|q(x)\right|\left|G(x,\lambda)\right|,
\end{align*}
i.e.,
\begin{align*}
\left(G(x,\lambda)\cdot e^{-Q\int^{\infty}_{x}\left|q(y)\right|dy}\right)_{x}\geq0.
\end{align*}
From the above results, $G(x,\lambda)\cdot e^{-Q\int^{\infty}_{x}\left|q(y)\right|dy}$ is nonnegative and monotonically increasing. We can find $G(x,\lambda)=0$ as $x\rightarrow\infty$, which means
\begin{align*}
f_{1}(x,\lambda)=\bar{f}_{1}(x,\lambda).
\end{align*}
So far, the uniqueness of the Jost functions are proved.

\textbf{Last, we prove the differentiability of the Jost functions.}

According to iterative sequences of $f_{1}(x,\lambda)$, we can get
\begin{align*}
f^{(0)}_{1,x}(x,\lambda)=&0,\quad f^{(0)}_{1,\lambda}(x,\lambda)=0,\\
f^{(j)}_{1,x}(x,\lambda)=&-q^{*}(y)\left(\int^{\infty}_{y}e^{-2i\lambda (y-z)}q(z)f^{(j-1)}_{1}(z,\lambda)dz\right),\\
f^{(j)}_{1,\lambda}(x,\lambda)=&-2i\int^{\infty}_{x}q^{*}(y)
\left(\int^{\infty}_{y}e^{-2i\lambda (y-z)}q(z)f^{(j-1)}_{1}(z,\lambda)dz\right)dy\\
&+\int^{\infty}_{x}q^{*}(y)
\left(\int^{\infty}_{y}e^{-2i\lambda (y-z)}q(z)f^{(j-1)}_{1,\lambda}(z,\lambda)dz\right)dy.
\end{align*}
By using \eqref{p3}, it is easy to get
\begin{align}
&f^{(j)}_{1,x}(x,\lambda)\leq\frac{1}{(j-1)!^{2}}|q(x)|Q^{2j-1}(x),\\
&f^{(j)}_{1,\lambda}(x,\lambda)\leq\frac{8}{(j-1)!^{2}}Q^{2j-1}(x)
\int^{\infty}_{x}\left|yq(y)\right|dy.
\end{align}
We can see that $f_{1}(x,\lambda)$ is differentiable to $x$ and $\lambda$, as well as $\sum^{\infty}_{j=0}f^{(j)}_{1,x}$ and $\sum^{\infty}_{j=0}f^{(j)}_{1,\lambda}$ converges uniformly on $a\leq x\leq\infty$ when $Im\lambda\geq0$.
Obviously, $f_{1,x}(x,\lambda)\sim0$, $f_{1,\lambda}(x,\lambda)\sim0$ when $x\rightarrow\infty$.
Therefore, the proof of differentiability of function is finished.

In the same way, we can prove the other Jost functions.
\end{proof}

Moreover, it is easy to verify the following relationship
\begin{align}\label{q1}
\phi_{2}=\bar{\phi}^{*}_{1},\quad\quad\phi_{1}=\bar{\phi}^{*}_{2},\\
\psi_{2}=\bar{\psi}^{*}_{1},\quad\quad\psi_{1}=\bar{\psi}^{*}_{2}.
\end{align}
Furthermore, they satisfy the following asymptotic conditions
\addtocounter {equation}{1}
\begin{align}
&\phi(x,\lambda)\sim
\left(\begin{array}{c}
0\\
1\\
\end{array}
\right)e^{-i\lambda x},\quad
\bar{\phi}(x,\lambda)\sim
\left(\begin{array}{c}
1\\
0\\
\end{array}
\right)e^{i\lambda x},
\quad\quad
as\quad x\rightarrow -\infty,\tag{\theequation a}\label{Q5.1}\\
&\psi(x,\lambda)\sim
\left(\begin{array}{c}
1\\
0\\
\end{array}
\right)e^{i\lambda x},\quad
\bar{\psi}(x,\lambda)\sim
\left(\begin{array}{c}
0\\
1\\
\end{array}
\right)e^{-i\lambda x},
\quad\quad
as\quad  x\rightarrow +\infty.\tag{\theequation b}\label{Q5.2}
\end{align}
 The functions $\phi$ and $\psi$ allow analytic extensions to the upper half $\lambda$-plane $\mathbb{C}^{+}$, while $\bar{\phi}$ and $\bar{\psi}$ allow analytic extensions to the upper half $\lambda$-plane $\mathbb{C}^{-}$ except finite simple poles and continuous up to the real axis $Im\lambda=0$.

\subsection{The scatting data}

Defining Wronskian as $W[\phi(x,\lambda),\psi(x,\lambda)]=
\phi_{1}(x,\lambda)\psi_{2}(x,\lambda)-\phi_{2}(x,\lambda)\psi_{1}(x,\lambda)$, obviously,  Wronskian has the following relations
\begin{align}\label{Q6}
&W[\phi(x,\lambda),\psi(x,\lambda)]=-W[\psi(x,\lambda),\phi(x,\lambda)],\\
&W[c_{1}\phi(x,\lambda),c_{2}\psi(x,\lambda)]=c_{1}c_{2}W[\phi(x,\lambda),\psi(x,\lambda)].
\end{align}

According to boundary conditions, we can know
\begin{align}\label{Q7}
W[\phi(x,\lambda),\bar{\phi}(x,\lambda)]=-1=-W[\psi(x,\lambda),\bar{\psi}(x,\lambda)].
\end{align}

In fact, since Jost functions are solutions of spectral problems, they must be linearly depedent. There have functions $a(\lambda)$, $\bar{a}(\lambda)$, $b(\lambda)$ and $\bar{b}(\lambda)$ about $\lambda$ that makes
\begin{align}\label{Q8}
\left\{\begin{aligned}
\phi(x,\lambda)=a(\lambda)\bar{\psi}(x,\lambda)+b(\lambda)\psi(x,\lambda),\\
\bar{\phi}(x,\lambda)=\bar{a}(\lambda)\psi(x,\lambda)+\bar{b}(\lambda)\bar{\psi}(x,\lambda),
\end{aligned}\right.
\end{align}
and satisfy the following propositions.

\begin{prop} The functions
$a(\lambda)$, $\bar{a}(\lambda)$, $b(\lambda)$ and $\bar{b}(\lambda)$ about $\lambda$ satisfy the following relation on the real $\lambda$-axis
\begin{align}\label{Q9}
b(\lambda)\bar{b}(\lambda)-a(\lambda)\bar{a}(\lambda)=-1.
\end{align}
\end{prop}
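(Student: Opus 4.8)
The plan is to compute the Wronskian $W[\phi,\bar\phi]$ in two independent ways and match the results. The boundary-condition identity \eqref{Q7} already records its value as $W[\phi,\bar\phi]=-1$, together with $W[\psi,\bar\psi]=1$; the idea is to obtain a second expression for $W[\phi,\bar\phi]$ directly from the scattering relations \eqref{Q8}. Before doing so I would note that the coefficient matrix $U$ in \eqref{Q2} is traceless, so the Wronskian of any two solutions of the spectral problem is independent of $x$; hence every Wronskian below is a genuine constant, and the values supplied by \eqref{Q7} may legitimately be used at the level of the scattering coefficients $a,\bar a,b,\bar b$.

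First I would substitute $\phi=a\bar\psi+b\psi$ and $\bar\phi=\bar a\psi+\bar b\bar\psi$ from \eqref{Q8} into $W[\phi,\bar\phi]$ and expand using the bilinearity in \eqref{Q6}:
\begin{align*}
W[\phi,\bar\phi]=a\bar a\,W[\bar\psi,\psi]+a\bar b\,W[\bar\psi,\bar\psi]+b\bar a\,W[\psi,\psi]+b\bar b\,W[\psi,\bar\psi].
\end{align*}
Next I would drop the two self-Wronskians $W[\bar\psi,\bar\psi]$ and $W[\psi,\psi]$, which vanish identically, and invoke the antisymmetry $W[\bar\psi,\psi]=-W[\psi,\bar\psi]$ from \eqref{Q6}. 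This collapses the four terms to
\begin{align*}
W[\phi,\bar\phi]=\left(b\bar b-a\bar a\right)W[\psi,\bar\psi].
\end{align*}

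Finally I would insert the constants from \eqref{Q7}, namely $W[\psi,\bar\psi]=1$ and $W[\phi,\bar\phi]=-1$, to conclude $-1=b\bar b-a\bar a$, which is precisely \eqref{Q9}. I do not anticipate any serious obstacle: the argument is essentially a one-line bilinear expansion. The only points that require care are the sign bookkeeping in the antisymmetry step and the justification---via the tracelessness of $U$---that each Wronskian is truly $x$-independent, so that the constant values in \eqref{Q7} can be compared directly with the coefficient functions appearing in \eqref{Q9}.
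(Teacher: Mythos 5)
Your proposal is correct and follows essentially the same route as the paper: substitute the scattering relations \eqref{Q8} into $W[\phi,\bar\phi]$, reduce it to $\left(b\bar{b}-a\bar{a}\right)W[\psi,\bar\psi]$, and conclude with the boundary values in \eqref{Q7}. The only cosmetic difference is that you expand the Wronskian bilinearly (also adding the useful remark that tracelessness of $U$ makes all Wronskians $x$-independent), whereas the paper performs the equivalent reduction by column operations on a determinant, which incidentally introduces an inessential division by $\bar{b}$ that your version avoids.
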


\begin{proof}
Substituting \eqref{Q8} into Wronskian, one can see
\begin{align*}
W[\phi(x,\lambda),\bar{\phi}(x,\lambda)]
&=W[a(\lambda)\bar{\psi}(x,\lambda)+b(\lambda)\psi(x,\lambda),
\bar{a}(\lambda)\psi(x,\lambda)+\bar{b}(\lambda)\bar{\psi}(x,\lambda)]\\
&=\left|
   \begin{array}{cc}
     a(\lambda)\bar{\psi}_{1}(x,\lambda)+b(\lambda)\psi_{1}(x,\lambda) & \bar{a}(\lambda)\psi_{1}(x,\lambda)+\bar{b}(\lambda)\bar{\psi}_{1}(x,\lambda) \\
     a(\lambda)\bar{\psi}_{2}(x,\lambda)+b(\lambda)\psi_{2}(x,\lambda) & \bar{a}(\lambda)\psi_{2}(x,\lambda)+\bar{b}(\lambda)\bar{\psi}_{2}(x,\lambda) \\
   \end{array}
 \right|\\
&=\frac{b(\lambda)\bar{b}(\lambda)-a(\lambda)\bar{a}(\lambda)}{\bar{b}(\lambda)}
\left|
   \begin{array}{cc}
     \psi_{1}(x,\lambda) & \bar{a}(\lambda)\psi_{1}(x,\lambda)+\bar{b}(\lambda)\bar{\psi}_{1}(x,\lambda) \\
     \psi_{2}(x,\lambda) & \bar{a}(\lambda)\psi_{2}(x,\lambda)+\bar{b}(\lambda)\bar{\psi}_{2}(x,\lambda) \\
   \end{array}
 \right|\\
&=\left(b(\lambda)\bar{b}(\lambda)-a(\lambda)\bar{a}(\lambda)\right)
W[\psi(x,\lambda),\bar{\psi}(x,\lambda)],
\end{align*}
and because of \eqref{Q7}, we know that
\begin{align*}
b(\lambda)\bar{b}(\lambda)-a(\lambda)\bar{a}(\lambda)=-1.
\end{align*}
\end{proof}

\begin{prop}
According to \eqref{Q7} and \eqref{Q8}, the following results hold
\begin{align}
a(\lambda)=-W[\phi(x,\lambda),\psi(x,\lambda)],\quad\quad
\bar{a}(\lambda)=W[\bar{\phi}(x,\lambda),\bar{\psi}(x,\lambda)],\label{Q10.1}\\
b(\lambda)=W[\phi(x,\lambda),\bar{\psi}(x,\lambda)],\quad\quad
\bar{b}(\lambda)=-W[\bar{\phi}(x,\lambda),\psi(x,\lambda)].\label{Q10.2}
\end{align}
\end{prop}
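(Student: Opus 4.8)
The plan is to substitute the linear-dependence relations \eqref{Q8} into each of the four Wronskians appearing on the right-hand sides of \eqref{Q10.1} and \eqref{Q10.2}, and then exploit the bilinearity recorded in \eqref{Q6} together with two elementary facts: the Wronskian of any Jost function with itself vanishes, and the normalization \eqref{Q7} fixes $W[\psi(x,\lambda),\bar{\psi}(x,\lambda)]=1$ (equivalently $W[\bar{\psi}(x,\lambda),\psi(x,\lambda)]=-1$ by antisymmetry). Each of these facts collapses one of the two terms produced by bilinearity, so that every computation reduces to a single surviving contribution.

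For $a(\lambda)$ I would evaluate $W[\phi,\psi]$ by replacing $\phi$ with $a\bar{\psi}+b\psi$; bilinearity gives $a\,W[\bar{\psi},\psi]+b\,W[\psi,\psi]$, the second term drops because $W[\psi,\psi]=0$, and $W[\bar{\psi},\psi]=-1$ leaves $W[\phi,\psi]=-a$, i.e. $a=-W[\phi,\psi]$. The computation for $b(\lambda)$ is identical except that I pair $\phi$ against $\bar{\psi}$: now the $\bar{\psi}$-with-$\bar{\psi}$ term vanishes while $W[\psi,\bar{\psi}]=1$ survives, giving $b=W[\phi,\bar{\psi}]$.

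The two barred coefficients follow by applying the same scheme to the second relation $\bar{\phi}=\bar{a}\psi+\bar{b}\bar{\psi}$. Pairing $\bar{\phi}$ against $\bar{\psi}$ eliminates the $\bar{\psi}$-self term and yields $\bar{a}=W[\bar{\phi},\bar{\psi}]$, while pairing $\bar{\phi}$ against $\psi$ eliminates the $\psi$-self term and, using $W[\bar{\psi},\psi]=-1$, produces $\bar{b}=-W[\bar{\phi},\psi]$. Collecting the four identities gives exactly \eqref{Q10.1} and \eqref{Q10.2}.

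There is no substantial obstacle here, as the argument is a routine exercise in the algebra of Wronskians. The only point demanding care is the sign bookkeeping: one must read off $W[\psi,\bar{\psi}]=+1$ from \eqref{Q7} rather than $-1$, and apply the antisymmetry in \eqref{Q6} whenever the two arguments occur in the reversed order, since a single sign slip at this step propagates directly into the signs of $a$ and $\bar{b}$. For completeness I would also observe that all four coefficients are genuinely $x$-independent: the spectral system \eqref{Q3} is trace-free, so the Wronskian of any two of its solutions is constant in $x$, which is precisely what makes the expansions \eqref{Q8} consistent and the resulting formulas well defined.
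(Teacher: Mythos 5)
Your proof is correct and takes essentially the same approach as the paper: both substitute the linear-dependence relations \eqref{Q8} into the Wronskians and evaluate the surviving terms via \eqref{Q7} and antisymmetry, the paper merely carrying out the bilinearity step component-by-component (multiplying the two component equations by $\psi_{2}$ and $\psi_{1}$ and subtracting, so the self-Wronskian term cancels automatically) rather than invoking it abstractly. Your closing observation that the Wronskians are $x$-independent because the system \eqref{Q3} is trace-free is a sound extra remark, though not needed for the identities themselves.
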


\begin{proof}
According to the first expression of \eqref{Q8}, we have
\begin{align*}
\phi_{1}(x,\lambda)=a(\lambda)\bar{\psi}_{1}(x,\lambda)+b(\lambda)\psi_{1}(x,\lambda),\\
\phi_{2}(x,\lambda)=a(\lambda)\bar{\psi}_{2}(x,\lambda)+b(\lambda)\psi_{2}(x,\lambda).
\end{align*}
Then we use $\psi_{2}(x,\lambda)$ and  $\psi_{1}(x,\lambda)$ to multiply above two formulas respectively,
\begin{align*}
\phi_{1}(x,\lambda)\psi_{2}(x,\lambda)
=a(\lambda)\bar{\psi}_{1}(x,\lambda)\psi_{2}(x,\lambda)
+b(\lambda)\psi_{1}(x,\lambda)\psi_{2}(x,\lambda),\\
\phi_{2}(x,\lambda)\psi_{1}(x,\lambda)
=a(\lambda)\bar{\psi}_{2}(x,\lambda)\psi_{1}(x,\lambda)
+b(\lambda)\psi_{2}(x,\lambda)\psi_{1}(x,\lambda).
\end{align*}
Finally, the second expression is subtracted from the first one for obtained results. Thus we can see
\begin{align*}
a(\lambda)=-W[\phi(x,\lambda),\psi(x,\lambda)].
\end{align*}
Similarly, the other three expressions can be obtained.
\end{proof}

Combining with \eqref{q1}, we have the symmetric relations
\begin{align}\label{q2}
a(\lambda)=\bar{a}^{*}(\lambda),\quad b(\lambda)=\bar{b}^{*}(\lambda).
\end{align}

There are finite discrete spectrum points $\lambda_{j}$ and $\bar{\lambda}_{j}$ as the zeros of $a(\lambda)$ in the upper half $\lambda$-plane and lower half $\lambda$-plane respectively. This is because the result of Wronskian calculated by $\phi(x,\lambda_{j})$ and $\psi(x,\lambda_{j})$  is irreversible, which further explains the linear depedent between $\phi(x,\lambda_{j})$ and $\psi(x,\lambda_{j})$. Similarly, $\bar{\phi}(x,\bar{\lambda}_{j})$ and $\bar{\psi}(x,\bar{\lambda}_{j})$ are also linear depedent.
Therefore, it is easy to see that
\begin{align}\label{Q11}
\left\{\begin{aligned}
\phi(x,\lambda_{j})=b(\lambda_{j})\psi(x,\lambda_{j}),\\
\bar{\phi}(x,\bar{\lambda}_{j})=\bar{b}(\bar{\lambda}_{j})\bar{\psi}(x,\bar{\lambda}_{j}).
\end{aligned}\right.
\end{align}

For constructing the normalizing coefficients about discrete spectrum points $\lambda_{j}$, we derive the two sides of the spectral problem \eqref{Q3} about $\lambda$, then
\begin{align}\label{Q12}
(\partial-i\lambda)\psi_{1,\lambda}(x,\lambda)
=i\psi_{1}(x,\lambda)+q^{*}(x)\psi_{2,\lambda}(x,\lambda),\\
(\partial+i\lambda)\psi_{2,\lambda}(x,\lambda)
=-i\psi_{2}(x,\lambda)+q(x)\psi_{1,\lambda}(x,\lambda).
\end{align}
Through some simple calculation, we can get
\begin{align}\label{Q13.1}
\frac{d}{dx}W[\psi_{\lambda}(x,\lambda),\phi(x,\lambda)]
=i\left(\psi_{1}(x,\lambda)\phi_{2}(x,\lambda)+\psi_{2}(x,\lambda)\phi_{1}(x,\lambda)\right).
\end{align}
Similarly, we also can obtain
\begin{align}\label{Q13.2}
\frac{d}{dx}W[\psi(x,\lambda),\phi_{\lambda}(x,\lambda)]
=-i\left(\psi_{1}(x,\lambda)\phi_{2}(x,\lambda)+\psi_{2}(x,\lambda)\phi_{1}(x,\lambda)\right).
\end{align}
Then we do some simple integral operations on \eqref{Q13.1} and \eqref{Q13.2}, and make subtractions for the obtained results. Finally substituting them into the zeros of $a(\lambda_{j})$, and making $l\rightarrow\infty$, we have
\begin{align}\label{Q14}
a_{\lambda}(\lambda_{j})
=2ib(\lambda_{j})\int^{\infty}_{-\infty}\psi_{1}(x,\lambda_{j})\psi_{2}(x,\lambda_{j})dx,
\end{align}
which is equivalent to
\begin{align}
2\int^{\infty}_{-\infty}\psi_{1}(x,\lambda_{j})\psi_{2}(x,\lambda_{j})dx
=-\frac{ia_{\lambda}(\lambda_{j})}{b(\lambda_{j})}.
\end{align}
If $\lambda_{j}$ is simple zeros of $a(\lambda)$, there is a constant $c_{j}$ that makes
\begin{align}\label{Q15}
2\int^{\infty}_{-\infty}c^{2}_{j}\psi_{1}(x,\lambda_{j})\psi_{2}(x,\lambda_{j})dx=1,
\end{align}
then
\begin{align}
c^{2}_{j}=\frac{ib(\lambda_{j})}{a_{\lambda}(\lambda_{j})},
\end{align}
here we define $c_{j}$ as the normalizing coefficient.

In the same way, the normalizing coefficient about discrete spectrum points $\bar{\lambda}_{j}$ can be written as
\begin{align}
\bar{c}^{2}_{j}=\frac{i\bar{b}(\bar{\lambda}_{j})}{\bar{a}_{\lambda}(\bar{\lambda}_{j})}.
\end{align}

Furthermore, by using \eqref{q2}, the symmetry of discrete spectrum points satisfy
\begin{align}\label{q3}
\lambda^{*}_{j}=\bar{\lambda}_{j},
\end{align}
and the symmetry of the normalizing coefficients satisfy
\begin{align}\label{q4}
(c^{2}_{j})^{*}=\bar{c}^{2}_{j}.
\end{align}

Next, we introduce the transmission coefficients
\begin{align*}
T(\lambda)=\frac{1}{a(\lambda)},\bar{T}(\lambda)=\frac{1}{\bar{a}(\lambda)},
\end{align*}
 and the reflection coefficients
\begin{align*}
R(\lambda)=\frac{b(\lambda)}{a(\lambda)},\bar{R}(\lambda)
=\frac{\bar{b}(\lambda)}{\bar{a}(\lambda)}.
\end{align*}
Due to \eqref{Q8}, we have
\begin{align}
\left\{\begin{aligned}
T(\lambda)\phi(x,\lambda)=\bar{\psi}(x,\lambda)+R(\lambda)\psi(x,\lambda),\\
\bar{T}(\lambda)\bar{\phi}(x,\lambda)=\psi(x,\lambda)+\bar{R}(\lambda)\bar{\psi}(x,\lambda).
\end{aligned}\right.
\end{align}

Therefore, through the above calculation, the scattering data $S(\lambda)$ is as follows:
\begin{align}
&\left\{Im\lambda=0,R(\lambda),\lambda_{j},c^{2}_{j},j=1,2,\cdots,l\right\},\\
&\left\{Im\lambda=0,\bar{R}(\lambda),\bar{\lambda}_{j},\bar{c}^{2}_{j},j=1,2,\cdots,\bar{l}\right\},
\end{align}
where $\bar{l}=l$.

\section{Inverse scattering transform}

In this section, we will recover the potential $q(x)$ in the spectral problem by substituting the scattering data from the previous section into the GLM integral equations.

\subsection{Translation transformation}

Through translation transformation, the Jost functions in the spectrum problem can be expressed by some integral kernels and their corresponding integral expressions. Assuming that
\begin{align}
&\psi(x,\lambda)=\left(
                  \begin{array}{c}
                    1 \\
                    0 \\
                  \end{array}
                \right)
                e^{i\lambda x}+\int^{\infty}_{x}K(x,s)e^{i\lambda s}ds,\quad
                 K(x,s)=\left(
                  \begin{array}{c}
                    K_{1}(x,s) \\
                    K_{2}(x,s) \\
                  \end{array}
                \right),\label{Q16.1}\\
&\bar{\psi}(x,\lambda)=\left(
                  \begin{array}{c}
                    1 \\
                    0 \\
                  \end{array}
                \right)
                e^{-i\lambda x}+\int^{\infty}_{x}\bar{K}(x,s)e^{-i\lambda s}ds,\quad
                 K(x,s)=\left(
                  \begin{array}{c}
                    \bar{K}_{1}(x,s) \\
                    \bar{K}_{2}(x,s) \\
                  \end{array}
                \right).\label{Q16.2}
\end{align}
and according to \eqref{Q16.1}, we have
\begin{align*}
&\psi_{1,x}=i\lambda e^{i\lambda x}-K_{1}(x,x)e^{i\lambda x}+\int^{\infty}_{x}K_{1,x}(x,s)e^{i\lambda s}ds,\\
&\psi_{2,x}=-K_{2}(x,x)e^{i\lambda x}+\int^{\infty}_{x}K_{2,x}(x,s)e^{i\lambda s}ds.
\end{align*}
If $\psi(x,\lambda)$ and $\bar{\psi}(x,\lambda)$ are the solutions of the spectrum problem \eqref{Q3}, we can get
\begin{align*}
&i\lambda e^{i\lambda x}-K_{1}(x,x)e^{i\lambda x}+\int^{\infty}_{x}K_{1,x}(x,s)e^{i\lambda s}ds\\
&=i\lambda e^{i\lambda x}+i\lambda\int^{\infty}_{x}K_{1}(x,s)e^{i\lambda s}ds
+q^{*}(\lambda)\int^{\infty}_{x}K_{2}(x,s)e^{i\lambda s}ds,
\end{align*}
by substituting the above correlation expressions into \eqref{Q3},
then
\begin{align}\label{Q17}
\int^{\infty}_{x}e^{i\lambda s}\left[K_{1,x}(x,s)+K_{1,s}(x,s)-q^{*}(\lambda)K_{2}(x,s)\right]ds
-\lim_{s\rightarrow\infty}K_{2}(x,s)e^{i\lambda s}=0.
\end{align}
Similarly, we have
\begin{align}\label{Q18}
\begin{split}
\int^{\infty}_{x}e^{i\lambda s}&\left[K_{2,x}(x,s)-K_{2,s}(x,s)-q(\lambda)K_{1}(x,s)\right]ds\\
&-\left[q(x)+2K_{2}(x,x)\right]e^{i\lambda x}
+\lim_{s\rightarrow\infty}K_{1}(x,s)e^{i\lambda s}=0.
\end{split}
\end{align}
Therefore, according to \eqref{Q17} and \eqref{Q18} we can get
\begin{align}
&K_{1,x}(x,s)+K_{1,s}(x,s)-q^{*}(\lambda)K_{2}(x,s)=0,\\
&K_{2,x}(x,s)-K_{2,s}(x,s)-q(\lambda)K_{1}(x,s),\\
&q(x)+2K_{2}(x,x)=0,\label{Q19}\\
&\lim_{s\rightarrow\infty}K(x,s)=0.
\end{align}

Through the same method of calculation, one can obtain
\begin{align}
&\bar{K}_{1,x}(x,s)-\bar{K}_{1,s}(x,s)-q^{*}(\lambda)\bar{K}_{2}(x,s)=0,\\
&\bar{K}_{2,x}(x,s)-\bar{K}_{2,s}(x,s)-q(\lambda)\bar{K}_{1}(x,s),\\
&q^{*}(\lambda)=\bar{K}_{2}(x,s)=0,\\
&\lim_{s\rightarrow\infty}\bar{K}(x,s)=0.
\end{align}

On the other hand, $\phi(x,\lambda)$ and $\bar{\phi}(x,\lambda)$ can be assumed as
\begin{align}
&\phi(x,\lambda)=\left(
                  \begin{array}{c}
                    0 \\
                    1 \\
                  \end{array}
                \right)
                e^{-i\lambda x}+\int^{x}_{-\infty}H(x,s)e^{-i\lambda s}ds,\quad
                 H(x,s)=\left(
                  \begin{array}{c}
                    H_{1}(x,s) \\
                    H_{2}(x,s) \\
                  \end{array}
                \right),\\
&\bar{\phi}(x,\lambda)=\left(
                  \begin{array}{c}
                    1 \\
                    0 \\
                  \end{array}
                \right)
                e^{i\lambda x}+\int^{x}_{-\infty}\bar{H}(x,s)e^{i\lambda s}ds,\quad
                 \bar{H}(x,s)=\left(
                  \begin{array}{c}
                    \bar{H}_{1}(x,s) \\
                    \bar{H}_{2}(x,s) \\
                  \end{array}
                \right).
\end{align}

\subsection{The Gel'fand-Levitan-Marchenko(GLM) integral equations}

In fact, the GLM integral equations presented here are not only related to the translation transformation kernel mentioned earlier, but also to the scattering data obtained in the previous section. For this reason, the following theorems can be summarized.

\begin{thm}
According to the scattering data of the spectrum problem obtained above
\begin{align*}
&\left\{Im\lambda=0,R(\lambda),\lambda_{j},c^{2}_{j},j=1,2,\cdots,l\right\},\\
&\left\{Im\lambda=0,\bar{R}(\lambda),\bar{\lambda}_{j},\bar{c}^{2}_{j},j=1,2,\cdots,\bar{l}\right\},
\end{align*}
and assuming
\begin{align}
&F_{c}(x)=\frac{1}{2\pi}\int^{+\infty}_{-\infty}R(\lambda)e^{i\lambda x}d\lambda,\quad\quad
\bar{F}_{c}(x)=\frac{1}{2\pi}\int^{+\infty}_{-\infty}\bar{R}(\lambda)e^{-i\lambda x}d\lambda,\\
&F_{d}(x)=\sum^{N}_{j=1}c^{2}_{j}e^{i\lambda_{j}x},\quad\quad\quad\quad\quad\quad
\bar{F}_{d}(x)=\sum^{\bar{N}}_{j=1}\bar{c}^{2}_{j}e^{-i\bar{\lambda}_{j}x},\\
&F(x)=F_{c}(x)-F_{d}(x),\quad\quad\quad\quad\quad\bar{F}(x)=\bar{F}_{c}(x)-\bar{F}_{d}(x),
\end{align}
then GLM integral equations can be obtained
\begin{align}\label{Q21}
\bar{K}(x,y)+\left(\begin{array}{c}
                 1 \\
                 0 \\
               \end{array}\right)
             F(x+y)+\int^{\infty}_{x}K(x,s)F(s+y)ds=0,\\
K(x,y)+\left(
               \begin{array}{c}
                 0 \\
                 1 \\
               \end{array}
             \right)
             \bar{F}(x+y)+\int^{\infty}_{x}\bar{K}(x,s)\bar{F}(s+y)ds=0.
\end{align}
\end{thm}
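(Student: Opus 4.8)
The plan is to begin from the transmission/reflection form of the scattering relation, namely $T(\lambda)\phi(x,\lambda)=\bar{\psi}(x,\lambda)+R(\lambda)\psi(x,\lambda)$, insert the triangular representations \eqref{Q16.1}--\eqref{Q16.2} for $\psi$ and $\bar{\psi}$ (with the free terms dictated by the asymptotics \eqref{Q5.2}), and then apply the operator $\frac{1}{2\pi}\int_{-\infty}^{+\infty}(\,\cdot\,)\,e^{i\lambda y}\,d\lambda$ at a fixed $y>x$. The resulting identity will be evaluated two ways: its right-hand side by Fourier/delta-function orthogonality, and its left-hand side $\frac{1}{2\pi}\int_{-\infty}^{+\infty}\frac{\phi(x,\lambda)}{a(\lambda)}e^{i\lambda y}\,d\lambda$ by residue calculus. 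Matching the two evaluations gives the first GLM equation, and the conjugate computation gives the second.

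First I would handle the right-hand side directly. Writing $\bar{\psi}(x,\lambda)=\binom{0}{1}e^{-i\lambda x}+\int_{x}^{\infty}\bar{K}(x,s)e^{-i\lambda s}\,ds$ and using $\frac{1}{2\pi}\int_{-\infty}^{+\infty}e^{i\lambda(y-s)}\,d\lambda=\delta(y-s)$, the free exponential integrates to $\binom{0}{1}\delta(y-x)=0$ for $y>x$, while the kernel term collapses to $\bar{K}(x,y)$. Inserting \eqref{Q16.1} into $R(\lambda)\psi(x,\lambda)$ splits it into $R(\lambda)\binom{1}{0}e^{i\lambda x}$ and $R(\lambda)\int_x^\infty K(x,s)e^{i\lambda s}\,ds$; since $\frac{1}{2\pi}\int R(\lambda)e^{i\lambda(x+y)}\,d\lambda=F_c(x+y)$, these contribute $\binom{1}{0}F_c(x+y)$ and $\int_x^\infty K(x,s)F_c(s+y)\,ds$. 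Thus the right-hand side yields $\bar{K}(x,y)+\binom{1}{0}F_c(x+y)+\int_x^\infty K(x,s)F_c(s+y)\,ds$.

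Next I would evaluate the left-hand side by closing the contour. Since $\phi$ extends analytically to $\mathbb{C}^{+}$ while $a(\lambda)$ has simple zeros $\lambda_j$ there, $\frac{\phi(x,\lambda)}{a(\lambda)}$ is meromorphic with simple poles at $\lambda_j$; because $\phi$ carries only factors $e^{-i\lambda s}$ with $s\le x<y$, Jordan's lemma justifies closing in the upper half-plane. The residue uses \eqref{Q11}, $\phi(x,\lambda_j)=b(\lambda_j)\psi(x,\lambda_j)$, together with $c_j^2=\frac{ib(\lambda_j)}{a_\lambda(\lambda_j)}$, so that $\mathrm{Res}_{\lambda_j}\frac{\phi(x,\lambda)}{a(\lambda)}=\frac{b(\lambda_j)}{a_\lambda(\lambda_j)}\psi(x,\lambda_j)=-ic_j^2\psi(x,\lambda_j)$. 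The residue theorem then gives $\sum_j c_j^2\,\psi(x,\lambda_j)\,e^{i\lambda_j y}$, and substituting \eqref{Q16.1} for $\psi(x,\lambda_j)$ produces exactly $\binom{1}{0}F_d(x+y)+\int_x^\infty K(x,s)F_d(s+y)\,ds$.

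Equating the two evaluations and collecting terms, the combination $F_c-F_d=F$ emerges in front of both the $\binom{1}{0}$ term and the integral kernel, which is precisely the first equation in \eqref{Q21}. The second GLM equation follows by the identical argument applied to $\bar{T}(\lambda)\bar{\phi}(x,\lambda)=\psi(x,\lambda)+\bar{R}(\lambda)\bar{\psi}(x,\lambda)$, now multiplying by $e^{-i\lambda y}$ and closing in the lower half-plane, where $\frac{\bar{\phi}}{\bar{a}}$ has its poles at $\bar{\lambda}_j$ and the analogous normalizing constants $\bar{c}_j^2$ appear. I expect the main obstacle to be the rigorous justification of the contour closure, that is, verifying the decay of $\frac{\phi(x,\lambda)}{a(\lambda)}$ on the large semicircle so that Jordan's lemma genuinely applies, and isolating the free exponential cleanly so the delta-function cancellation is valid precisely for $y>x$; by contrast the residue bookkeeping is routine once the normalizing relation for $c_j^2$ is substituted.
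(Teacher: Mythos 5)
Your strategy coincides with the paper's own proof in all essentials: the paper likewise starts from $\phi/a=\bar\psi+R\psi$, applies $\frac{1}{2\pi}\int(\cdot)\,e^{i\lambda y}d\lambda$, evaluates the reflection and kernel terms by Fourier inversion, evaluates the left side by residues at the zeros $\lambda_j$ of $a(\lambda)$, and inserts $c_j^2=ib(\lambda_j)/a_\lambda(\lambda_j)$; your residue bookkeeping and the final identification $F=F_c-F_d$ match the paper exactly, as does your treatment of the conjugate relation for the second GLM equation.

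There is, however, a genuine gap at precisely the step you flag as "the main obstacle," and the decay you hope for does not in fact hold: Jordan's lemma cannot be applied to $\frac{\phi(x,\lambda)}{a(\lambda)}e^{i\lambda y}$ as it stands. As $|\lambda|\to\infty$ in $\overline{\mathbb{C}^{+}}$ one has $\phi(x,\lambda)e^{i\lambda x}\to\binom{0}{1}$ and $a(\lambda)\to 1$, so the integrand behaves like $\binom{0}{1}e^{i\lambda(y-x)}$: its amplitude tends to a nonzero constant on the large semicircle, Jordan's lemma gives only the bound $\pi M/(y-x)$, and the semicircle integral of the limiting term equals $-\binom{0}{1}\frac{2\sin\big(R(y-x)\big)}{y-x}$, which oscillates and does not vanish as $R\to\infty$; correspondingly $\int_{-\infty}^{\infty}\frac{\phi}{a}e^{i\lambda y}d\lambda$ is not even classically convergent. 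The missing idea is exactly the paper's key organizational step: subtract the non-decaying part before closing the contour. The paper works with $\left(\frac{1}{a(\lambda)}-1\right)\phi$ on the left, i.e. it moves $-\phi$ to the right-hand side, where its free exponential $\binom{0}{1}e^{-i\lambda x}$ pairs with $e^{i\lambda y}$ to give $\delta(y-x)=0$ for $y>x$, and its kernel part $\int_{-\infty}^{x}H(x,s)e^{-i\lambda s}ds$ Fourier-transforms to a function supported in $y<x$, hence also drops out. After this subtraction the left integrand decays like $O(1/\lambda)$ on the semicircle, the closure is legitimate, and the remainder of your computation (residues, $c_j^2$, the $\bar K$ equation) goes through verbatim; without it, the contour-closing step as you stated it fails.
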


\begin{proof}
To begin with, by doing some calculations for the first formula of \eqref{Q8} we can get
\begin{align*}
\left(\frac{1}{a(\lambda)}-1\right)\phi(x,\lambda)
=&R(\lambda)\left[\psi(x,\lambda)-\left(\begin{array}{c}
                                     1 \\
                                     0 \\
                                   \end{array}\right)
                                   e^{i\lambda x}
\right]
+R(\lambda)\left(\begin{array}{c}
                                     1 \\
                                     0 \\
                                   \end{array}\right)
                                   e^{i\lambda x}\\
&+\left[\psi(x,\lambda)-\left(\begin{array}{c}
                                     0 \\
                                     1 \\
                                   \end{array}\right)
                                   e^{-i\lambda x}
\right]
+\left[\left(\begin{array}{c}
 0 \\
 1 \\
\end{array}\right)
e^{-i\lambda x}-\phi(x,\lambda)
\right],
\end{align*}
then operating the inverse Fourier transform on both sides of the above expression and simplifying the obtained formula, we can obtain
\begin{align*}
&\frac{1}{2\pi}\int^{+\infty}_{-\infty}\left(\frac{1}{a(\lambda)}-1\right)\phi(x,\lambda)e^{i\lambda y}d\lambda\\
&=\frac{1}{2\pi}\int^{+\infty}_{-\infty}R(\lambda)\left[\int^{\infty}_{x}K(x,s)e^{i\lambda s}ds
\right]e^{i\lambda y}d\lambda
+\left(\begin{array}{c}
1 \\
0 \\
\end{array}\right)\frac{1}{2\pi}
\int^{+\infty}_{-\infty}R(\lambda)
e^{i\lambda (x+y)}d\lambda\\
&+\frac{1}{2\pi}\int^{+\infty}_{-\infty}\left[\int^{\infty}_{x}\bar{K}(x,s)e^{-i\lambda s}ds
\right]e^{i\lambda y}d\lambda
-\frac{1}{2\pi}\int^{+\infty}_{-\infty}\left[\int^{x}_{-\infty}H(x,s)e^{-i\lambda s}ds
\right]e^{i\lambda y}d\lambda,
\end{align*}
here the last term is equal to zero by Fourier transform. On the left side of the above equation, we use the residue theorem to integrate the large loop of the upper half plane
\begin{align*}
&\frac{1}{2\pi}\int^{+\infty}_{-\infty}\left(\frac{1}{a(\lambda)}-1\right)\phi(x,\lambda)e^{i\lambda y}d\lambda\\
&=iRes\left(\frac{1}{a(\lambda)}-1\right)\phi(x,\lambda)e^{i\lambda y}
=\sum^{N}_{j=1}\frac{ib(\lambda_{j})\psi(x,\lambda_{j})}{a_{\lambda}(\lambda_{j})}
e^{i\lambda_{j} y}\\
&=\sum^{N}_{j=1}c^{2}_{j}\left[\left(\begin{array}{c}
1 \\
0 \\
\end{array}\right)
e^{i\lambda_{j}(x+y)}
+\int^{\infty}_{x}K(x,s)e^{i\lambda_{j}(s+y)}ds
\right]\\
&=\left(\begin{array}{c}
1 \\
0 \\
\end{array}\right)\sum^{N}_{j=1}c^{2}_{j}e^{i\lambda_{j}(x+y)}
+\int^{\infty}_{x}K(x,s)\sum^{N}_{j=1}c^{2}_{j}e^{i\lambda_{j}(s+y)}ds.
\end{align*}
By summarizing the above calculation, the original formula can be changed into
\begin{align*}
&\left(\begin{array}{c}
1 \\
0 \\
\end{array}\right)\sum^{N}_{j=1}c^{2}_{j}e^{i\lambda_{j}(x+y)}
+\int^{\infty}_{x}K(x,s)\sum^{N}_{j=1}c^{2}_{j}e^{i\lambda_{j}(s+y)}ds\\
&=\frac{1}{2\pi}\int^{+\infty}_{-\infty}R(\lambda)\left[\int^{\infty}_{x}K(x,s)e^{i\lambda s}ds
\right]e^{i\lambda y}d\lambda\\
&+\left(\begin{array}{c}
1 \\
0 \\
\end{array}\right)\frac{1}{2\pi}
\int^{+\infty}_{-\infty}R(\lambda)
e^{i\lambda (x+y)}d\lambda
+\frac{1}{2\pi}\int^{+\infty}_{-\infty}\left[\int^{\infty}_{x}\bar{K}(x,s)e^{-i\lambda s}ds
\right]e^{i\lambda y}d\lambda.
\end{align*}
Then, assuming that
\begin{align*}
&F_{c}(x)=\frac{1}{2\pi}\int^{+\infty}_{-\infty}R(\lambda)e^{i\lambda x}d\lambda,\\
&F_{d}(x)=\sum^{N}_{j=1}c^{2}_{j}e^{i\lambda_{j}x},
\end{align*}
the above expression becomes
\begin{align*}
&\left(\begin{array}{c}
1 \\
0 \\
\end{array}\right)F_{d}(x+y)+\int^{\infty}_{x}K(x,s)F_{d}(s+y)ds\\
&=\int^{\infty}_{x}K(x,s)F_{c}(s+y)ds+\left(\begin{array}{c}
1 \\
0 \\
\end{array}\right)F_{c}(x+y)+\bar{K}(x,y).
\end{align*}
Through transposition of terms and assuming
\begin{align*}
F(x)=F_{c}(x)-F_{d}(x),
\end{align*}
one can obtain
\begin{align*}
\bar{K}(x,y)+\left(\begin{array}{c}
                 1 \\
                 0 \\
               \end{array}\right)
             F(x+y)+\int^{\infty}_{x}K(x,s)F(s+y)ds=0.
\end{align*}

On the same method, we also can get
\begin{align*}
K(x,y)+\left(
               \begin{array}{c}
                 0 \\
                 1 \\
               \end{array}
             \right)
             \bar{F}(x+y)+\int^{\infty}_{x}\bar{K}(x,s)\bar{F}(s+y)ds=0.
\end{align*}
\end{proof}

\subsection{The time evolution of scatting data}

When the initial condition $q(x,0)$ is given, its corresponding scattering data is $S(\lambda,0)$. However, the scattering data corresponding to $q(x,t)$ is $S(\lambda,t)$. In order to recover the potential, next we will find out the relationship between them.

To begin with, the part of lax pair about $t$ is added with $vI$, where $v$ is a parameter, i.e.,
\begin{align}\label{Q23}
\phi_{t}=\left(
           \begin{array}{cc}
             A & B^{*} \\
             B & -A \\
           \end{array}
         \right)
         \phi-vI\phi.
\end{align}
Owing to $q(x)$ and its derivatives with respect to $x$ approaches to zero rapidly as $|x|\rightarrow \infty$, then there are
\begin{align*}
\lim_{|x|\rightarrow \infty}\left(
           \begin{array}{cc}
             A & B^{*} \\
             B & -A \\
           \end{array}
         \right)
=\left(\begin{array}{cc}
\bar{A}(\lambda) & 0 \\
0 & -\bar{A}(\lambda) \\
\end{array}\right).
\end{align*}

When $x\rightarrow-\infty$, $\phi(x,\lambda)$ in the above expression is replaced by the asymptotic condition in \eqref{Q5.1}, hence $v=-\bar{A}(\lambda)$.

When $x\rightarrow\infty$, according to the same way and combining with $\phi=a\bar{\psi}+b\psi$, we can get
\begin{align}\label{Q25}
\left(\begin{array}{c}
a_{t}e^{i\lambda x} \\
b_{t}e^{-i\lambda x} \\
\end{array}\right)
=\left(\begin{array}{cc}
\bar{A}(\lambda)-v & 0 \\
0 & -\bar{A}(\lambda)-v \\
\end{array}\right)
\left[
a\left(\begin{array}{c}
1 \\
0 \\
\end{array}\right)e^{i\lambda x}
+b\left(\begin{array}{c}
0 \\
1 \\
\end{array}\right)e^{-i\lambda x}
\right].
\end{align}
In component form \eqref{Q25} yields
\begin{align*}
\left\{\begin{aligned}
&a_{t}e^{i\lambda x}=\left[\bar{A}(\lambda)-v\right]ae^{i\lambda x},\\
&b_{t}e^{-i\lambda x}=\left[-\bar{A}(\lambda)-v\right]be^{-i\lambda x},
\end{aligned}\right.
\end{align*}
thus
\begin{align*}
\left\{\begin{aligned}
&a_{t}=\left[\bar{A}(\lambda)-v\right]a=2\bar{A}(\lambda)a,\\
&b_{t}=\left[-\bar{A}(\lambda)-v\right]b=0.
\end{aligned}\right.
\end{align*}

Therefore, with the above result, the normalizing coefficient can be written in the form
\begin{align*}
(c_{j})^{2}_{t}&=\left(\frac{ib}{a_{\lambda}}\right)_{t}
=i\frac{b_{t}a_{\lambda}-b(a_{\lambda})_{t}}{(a_{\lambda})^{2}}
=-i\frac{b\cdot2\bar{A}(\lambda)a_{\lambda}}{(a_{\lambda})^{2}}\\
&=-2\bar{A}(\lambda)c^{2}_{j},
\end{align*}
thus
\begin{align*}
c^{2}_{j}(t)=e^{-2\bar{A}(\lambda)t}c^{2}_{j}(0).
\end{align*}
The reflection coefficient also can be obtained as
\begin{align*}
R_{t}&=\left(\frac{b}{a}\right)_{t}
=\frac{b_{t}a-ba_{t}}{a^{2}}
=-\frac{b\cdot2\bar{A}(\lambda)a}{a^{2}}\\
&=-2\bar{A}(\lambda)R,
\end{align*}
i.e.,
\begin{align*}
R(t)=e^{-2\bar{A}(\lambda)t}R(0).
\end{align*}
Like the above way, we can get
\begin{align*}
&\bar{c}^{2}_{j}(t)=e^{2\bar{A}(\lambda)t}\bar{c}^{2}_{j}(0),\\
&\bar{R}(t)=e^{2\bar{A}(\lambda)t}\bar{R}(0).
\end{align*}

\subsection{Multi-solition solutions}

In this section, we will take into account the inverse scattering problem of \eqref{Q1} with reflectionless potentials. Supposing $R(\lambda)=\bar{R}(\lambda)=0$, the GLM integral equations reduce to
\begin{align}\label{Q26}
\left\{\begin{aligned}
&\bar{K}(x,y)\left(\begin{array}{c}
                 1 \\
                 0 \\
               \end{array}\right)
             F_{d}(x+y)+\int^{\infty}_{x}K(x,s)F_{d}(s+y)ds=0,\\
&K(x,y)-\left(\begin{array}{c}
                 0 \\
                 1 \\
               \end{array}
             \right)
             \bar{F}_{d}(x+y)-\int^{\infty}_{x}\bar{K}(x,s)\bar{F}_{d}(s+y)ds=0.
\end{aligned}\right.
\end{align}

Removing $\bar{K}(x,y)$ from the degenerated GLM integral equations, one can obtain
\begin{align}
\begin{split}
K(x,y)&-\left(\begin{array}{c}
                 0 \\
                 1 \\
               \end{array}\right)
             \bar{F}_{d}(x+y)
-\left(\begin{array}{c}
                 1 \\
                 0 \\
               \end{array}\right)\int^{\infty}_{x}F_{d}(x+s)\bar{F}_{d}(s+y)ds\\
&-\int^{\infty}_{x}\left[\int^{\infty}_{x}K(x,z)F_{d}(z+s)dz\right]\bar{F}_{d}(s+y)ds=0.
\end{split}
\end{align}
In component form, it is equivalent to
\begin{align}\label{Q27}
\left\{\begin{aligned}
&K_{1}(x,y)
-\int^{\infty}_{x}F_{d}(x+s)\bar{F}_{d}(s+y)ds
-\int^{\infty}_{x}K_{1}(x,z)\int^{\infty}_{x}F_{d}(z+s)\bar{F}_{d}(s+y)dsdz=0,\\
&K_{2}(x,y)-\bar{F}_{d}(x+y)
-\int^{\infty}_{x}K_{2}(x,z)\int^{\infty}_{x}F_{d}(z+s)\bar{F}_{d}(s+y)dsdz=0,
\end{aligned}\right.
\end{align}
where
\begin{align*}
\int^{\infty}_{x}F_{d}(z+s)\bar{F}_{d}(s+y)ds
&=\int^{\infty}_{x}\sum^{N}_{j=1}c^{2}_{j}e^{i\lambda_{j}(z+s)}
\sum^{\bar{N}}_{p=1}\bar{c}^{2}_{p}e^{-i\bar{\lambda}_{p}(s+y)}ds\\
&=\sum^{N}_{j=1}\sum^{\bar{N}}_{p=1}\frac{ic^{2}_{j}\bar{c}^{2}_{p}}{\lambda_{j}-\bar{\lambda}_{p}}
e^{i\lambda_{j}(z+x)-i\bar{\lambda}_{p}(x+y)}.
\end{align*}

To construct the multi-solition solution of the equation \eqref{Q1}, we introduce $N$-dimensional unit matrix $I_{N}$ and $\bar{N}$-dimensional unit matrix $I_{\bar{N}}$, as well as take $N\times1$ column vectors
\begin{align*}
g(x,t)=\left(g_{1}(x,t),g_{2}(x,t),\cdots,g_{N}(x,t),\right)^{T},\\
h(x,t)=\left(h_{1}(x,t),h_{2}(x,t),\cdots,h_{N}(x,t),\right)^{T},
\end{align*}
 $\bar{N}\times1$ column vectors
\begin{align*}
\bar{g}(x,t)=\left(\bar{g}_{1}(x,t),\bar{g}_{2}(x,t),\cdots,\bar{g}_{\bar{N}}(x,t),\right)^{T},\\
\bar{h}(x,t)=\left(\bar{h}_{1}(x,t),\bar{h}_{2}(x,t),\cdots,\bar{h}_{\bar{N}}(x,t),\right)^{T},
\end{align*}
and matrix $E(x,t)=(e_{pj})_{\bar{N}\times N}$, where
\begin{align*}
h_{m}(x,t)=c_{m}(t)e^{i\lambda_{m}x},\quad\quad
\bar{h}_{n}(x,t)=\bar{c}_{n}(t)e^{i\bar{\lambda}_{n}x},\quad\quad
e_{nm}(x,t)=\frac{h_{m}(x,t)\bar{h}_{n}(x,t)}{\lambda_{m}-\bar{\lambda}_{n}}.
\end{align*}

Supposing
\begin{align}\label{Q28}
K_{2}(x,y)=\bar{h}(y,t)^{T}\bar{g}(x,t),
\end{align}
and substituting \eqref{Q28} into the second expression of \eqref{Q27}, we   directly have
\begin{align}\label{Q29}
\begin{split}
\sum^{\bar{N}}_{n=1}&\bar{c}_{n}(t)\bar{g}_{n}(x,t)e^{-i\bar{\lambda}_{n}y}
-\sum^{\bar{N}}_{n=1}\bar{c}^{2}_{n}(t)e^{-i\bar{\lambda}_{n}(x+y)}\\
&-\sum^{\bar{N}}_{p=1}\sum^{N}_{j=1}\sum^{\bar{N}}_{n=1}\int^{\infty}_{x}
\bar{c}_{p}(t)\bar{g}_{p}(x,t)e^{-i\bar{\lambda}_{p}z}
\frac{ic^{2}_{j}(t)\bar{c}^{2}_{n}(t)}{\lambda_{j}-\bar{\lambda}_{n}}
e^{i\lambda_{j}(x+z)-i\bar{\lambda}_{n}(x+y)}dz=0,
\end{split}
\end{align}
here
\begin{align*}
\int^{\infty}_{x}e^{-i\bar{\lambda}_{p}z+i\lambda_{j}z}dz
=\frac{e^{-i\bar{\lambda}_{p}x+i\lambda_{j}x}}{-i(\lambda_{j}-\bar{\lambda}_{p})}.
\end{align*}
Hence \eqref{Q28} can be simplified as follows
\begin{align}
\bar{g}_{n}(x,t)-\bar{c}_{n}(t)e^{-i\bar{\lambda}_{n}x}
+\sum^{\bar{N}}_{p=1}\sum^{N}_{j=1}\bar{g}_{p}(x,t)e_{pj}(x,t)e_{nj}(x,t)=0,\quad
n=1,2,\cdots,\bar{N}.
\end{align}
In a vector form, it is equivalent to
\begin{align}
\bar{g}(x,t)-\bar{h}(x,t)+E(x,t)E(x,t)^{T}\bar{g}(x,t)=0,
\end{align}
which can be written as follows
\begin{align}
\bar{g}(x,t)=\left(I_{\bar{N}}+E(x,t)E(x,t)^{T}\right)^{-1}\bar{h}(x,t).
\end{align}
As a result, we have
\begin{align*}
K_{2}(x,y)&=\bar{h}(y,t)^{T}\bar{g}(x,t)\\
&=\bar{h}(y,t)^{T}\left(I_{\bar{N}}+E(x,t)E(x,t)^{T}\right)^{-1}\bar{h}(x,t)\\
&=tr\left[\left(I_{\bar{N}}+E(x,t)E(x,t)^{T}\right)^{-1}\bar{h}(x,t)\bar{h}(y,t)^{T}\right].
\end{align*}
Resorting to \eqref{Q19}, the potential $q(x)$ is recovered as follows
\begin{align}
\begin{split}
q(x)&=-2K_{2}(x,x)\\
&=-2tr\left[\left(I_{\bar{N}}+E(x,t)E(x,t)^{T}\right)^{-1}\bar{h}(x,t)\bar{h}(x,t)^{T}\right].
\end{split}
\end{align}

\section{Some solutions and their analysis}

In this section, we will give some soliton solutions and breathe solutions, and analyze their figures.

\subsection{One-soliton solution}

Take $N=\bar{N}=1$, one can see that
\begin{align*}
E(x,t)=e_{11}(x,t)=\frac{h_{1}\bar{h}_{1}}{\lambda_{1}-\bar{\lambda}_{1}},
\end{align*}
then
\begin{align}\label{Q30}
\begin{split}
q_{1}(x,t)
&=-2tr\left[\left(I_{\bar{N}}+E(x,t)E(x,t)^{T}\right)^{-1}\bar{h}(x,t)\bar{h}(x,t)^{T}\right]\\
&=-2(1+e^{2}_{11})^{-1}\bar{h}^{2}_{1}\\
&=\frac{(\lambda_{1}-\bar{\lambda}_{1})^{2}\bar{h}^{2}_{1}}
{(\lambda_{1}-\bar{\lambda}_{1})^{2}+(h_{1}\bar{h}_{1})^{2}}.
\end{split}
\end{align}

Setting $(\lambda_{1}-\bar{\lambda}_{1})^{2}+[c_{1}(0)]^{4}=0$ and $(\lambda_{1}-\bar{\lambda}_{1})^{2}+[\bar{c}_{1}(0)]^{4}=0$ as well as
denoting $\lambda_{1}=\alpha+i\beta$, substituting the above constraint and relevant assumptions involved into \eqref{Q30}, we obtain
\begin{align}
q_{1}(x,t)=\frac{4\alpha e^{2(-i(\alpha)^{2}-32\delta i(\alpha)^{6})t+2\alpha x}}
{-e^{2(i(\alpha)^{2}+32\delta i(\alpha)^{6})t-2\alpha x}e^{2(-i(\alpha)^{2}-32\delta i(\alpha)^{6})t+2\alpha x}+1}.
\end{align}

In what follows, we firstly discuss the case when $N=\bar{N}=1$. Fig. 1 shows the one-soliton solution by choosing the appropriate parameters. It is easy to find that real part and imaginary part of $\alpha$ have different effects on the solution respectively, while $\delta$ has little obvious effect.

\noindent
{\rotatebox{0}{\includegraphics[width=3.5cm,height=3.5cm,angle=0]{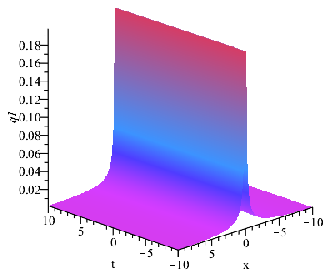}}}
~~~~
{\rotatebox{0}{\includegraphics[width=3.5cm,height=3.5cm,angle=0]{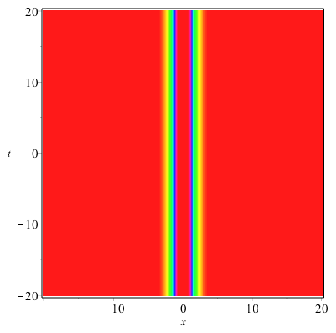}}}
\qquad\quad
{\rotatebox{0}{\includegraphics[width=3.5cm,height=3.5cm,angle=0]{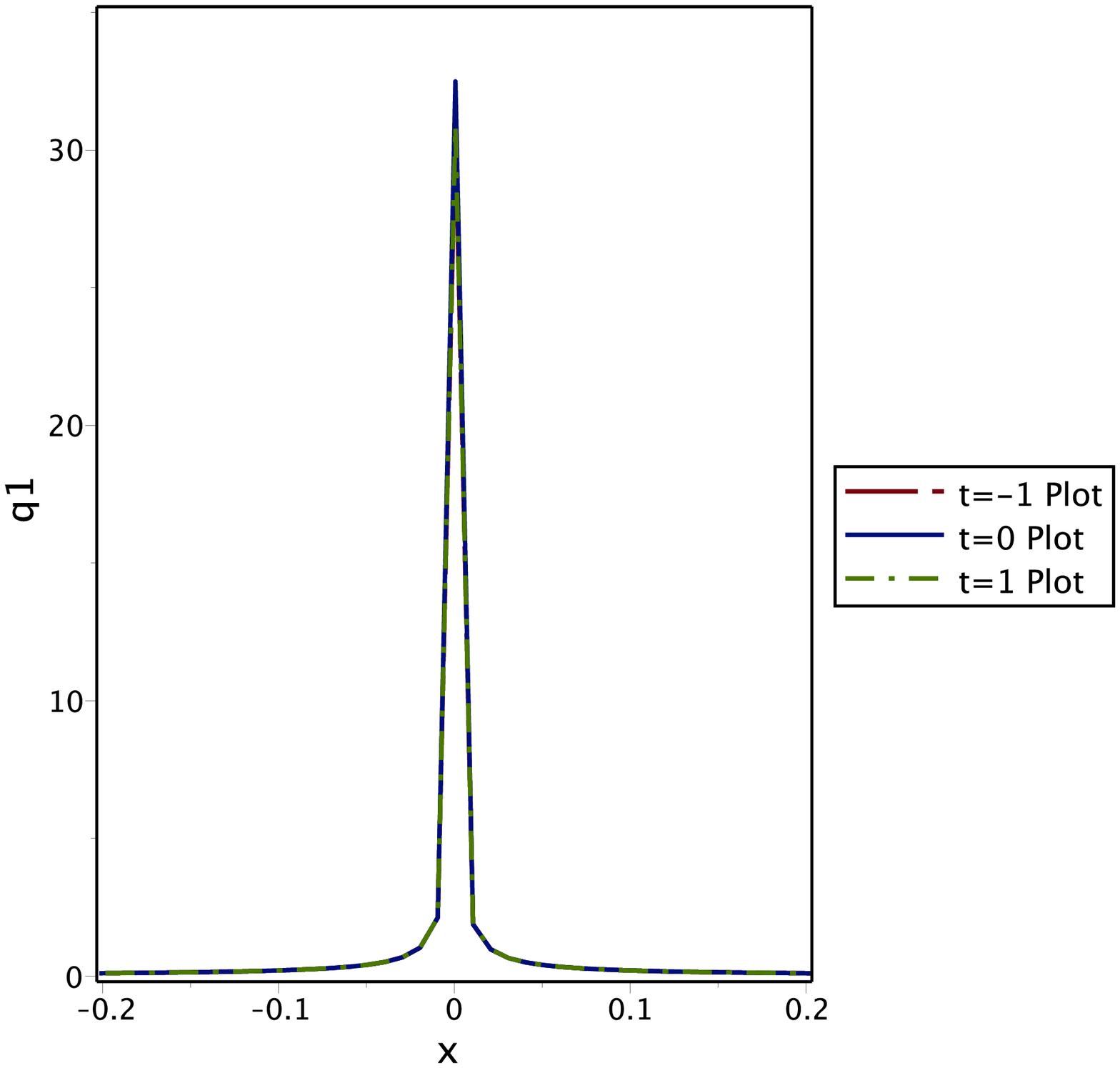}}}

\qquad\quad  $(a1)$
\qquad\qquad\qquad\qquad\qquad $(a2)$ \qquad\qquad\qquad\qquad\qquad$(a3)$\\
\noindent { \small \textbf{Figure 1.}
The single-soliton solutions for $|q_{1}|$ with the parameters selection
$\alpha=0.0001,\beta=-0.005,\delta=0.05.$
$\textbf{(a1)}$ three dimensional plot at time $t = 0$  in the $(x, t)$ plane,
$\textbf{(a2)}$ density plot,
$\textbf{(a3)}$ The wave propagation along the $x$-axis with $t = -1$, $t = 0$, $t = 1$.}

\subsection{Two-soliton solution}

When taking $N=2$, we denote
\begin{align*}
(\lambda_{1}-\bar{\lambda}_{1})^{2}+[c_{1}(0)]^{4}=0,\quad
(\lambda_{1}-\bar{\lambda}_{1})^{2}+[\bar{c}_{1}(0)]^{4}=0,\\
(\lambda_{2}-\bar{\lambda}_{2})^{2}+[c_{2}(0)]^{4}=0,\quad
(\lambda_{2}-\bar{\lambda}_{2})^{2}+[\bar{c}_{2}(0)]^{4}=0,
\end{align*}
and
\begin{align*}
\lambda_{1}=\alpha_{1}+i\beta_{1},\quad
\lambda_{2}=\alpha_{2}+i\beta_{2}.
\end{align*}

To begin with, we mark
\begin{align*}
W&=I_{2}+E(x,t)E(x,t)^{T}\\
&=\left(\begin{array}{cc}
      1 & 0 \\
      0 & 1 \\
    \end{array}\right)+
    \left(\begin{array}{cc}
      e_{11} & e_{12} \\
      e_{21} & e_{22} \\
    \end{array}\right)
    \left(\begin{array}{cc}
      e_{11} & e_{21} \\
      e_{12} & e_{22} \\
    \end{array}\right)\\
&=\left(\begin{array}{cc}
      1+e^{2}_{11}+e^{2}_{12} & e_{11}e_{21}+e_{12}e_{22} \\
      e_{11}e_{21}+e_{12}e_{22} & 1+e^{2}_{21}+e^{2}_{22} \\
    \end{array}\right).
\end{align*}
According to the expression of $W$, we have
\begin{align*}
W^{-1}=\frac{\left(\begin{array}{cc}
      1+e^{2}_{21}+e^{2}_{22} & -e_{11}e_{21}-e_{12}e_{22} \\
      -e_{11}e_{21}-e_{12}e_{22} & 1+e^{2}_{11}+e^{2}_{12} \\
    \end{array}\right)}
    {1+e^{2}_{11}+e^{2}_{12}+e^{2}_{21}+e^{2}_{22}+e^{2}_{11}e^{2}_{22}+e^{2}_{12}e^{2}_{21}
    -2e_{11}e_{12}e_{21}e_{22}}.
\end{align*}
Taking use of the above result, we can directly compute
\begin{align*}
&q_{2}(x,t)\\
&=-2tr\left[\left(I_{\bar{N}}+E(x,t)E(x,t)^{T}\right)^{-1}\bar{h}(x,t)\bar{h}(x,t)^{T}\right]\\
&=-2\frac{\bar{h}^{2}_{1}(1+e^{2}_{21}+e^{2}_{22})+\bar{h}^{2}_{2}(1+e^{2}_{11}+e^{2}_{12})
-2\bar{h}_{1}\bar{h}_{2}(e_{11}e_{21}+e_{12}e_{22})}
{1+e^{2}_{11}+e^{2}_{12}+e^{2}_{21}+e^{2}_{22}+e^{2}_{11}e^{2}_{22}+e^{2}_{12}e^{2}_{21}},
\end{align*}
where
\begin{align*}
&e_{11}=\frac{c_{1}(t)\bar{c}_{1}(t)e^{i(\lambda_{1}-\bar{\lambda}_{1})x}}{\lambda_{1}-\bar{\lambda}_{1}},
e_{12}=\frac{c_{2}(t)\bar{c}_{1}(t)e^{i(\lambda_{2}-\bar{\lambda}_{1})x}}{\lambda_{2}-\bar{\lambda}_{1}},\\
&e_{21}=\frac{c_{1}(t)\bar{c}_{2}(t)e^{i(\lambda_{1}-\bar{\lambda}_{2})x}}{\lambda_{1}-\bar{\lambda}_{2}},
e_{22}=\frac{c_{2}(t)\bar{c}_{2}(t)e^{i(\lambda_{2}-\bar{\lambda}_{2})x}}{\lambda_{2}-\bar{\lambda}_{2}},\\
&\bar{h}_{1}=\bar{c}_{1}(t)e^{-i\bar{\lambda}_{1}x},
\bar{h}_{2}=\bar{c}_{2}(t)e^{-i\bar{\lambda}_{2}x}.
\end{align*}

\noindent
{\rotatebox{0}{\includegraphics[width=3.5cm,height=3.5cm,angle=0]{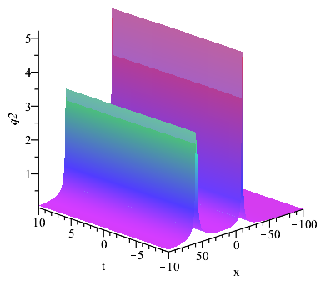}}}
~~~~
{\rotatebox{0}{\includegraphics[width=3.5cm,height=3.5cm,angle=0]{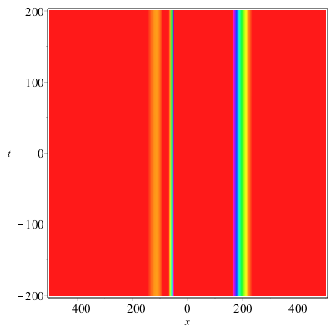}}}
\qquad\quad
{\rotatebox{0}{\includegraphics[width=3.5cm,height=3.5cm,angle=0]{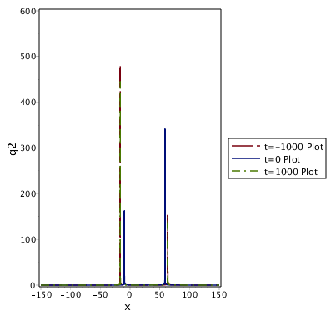}}}

\qquad\quad $(d1)$
\qquad\qquad\qquad\qquad\qquad $(d2)$ \qquad\qquad\qquad\qquad\qquad$(d3)$\\
\noindent { \small \textbf{Figure 2.}
The two-soliton solution for $|q_{2}|$ with the parameters selection
$\alpha_{1}=-0.01, \beta_{1}=-0.01,
\alpha_{2}=0.01, \beta_{2}=-0.01,
\delta=0.01.$
$\textbf{(d1)}$ three dimensional plot at time $t = 0$  in the $(x, t)$ plane,
$\textbf{(d2)}$ density plot,
$\textbf{(d3)}$ The wave propagation along the $x$-axis with $t = -1000$, $t = 0$, $t = 1000$.}\\

Next, we discuss the case for $N=\bar{N}=2$.
For the soliton solution  obtained in \cite{Sun-2017}, it is the soliton activity discussed under the condition of plane waves, and the two solitons collide. In Fig. 2, the two solitons moved side by side on the zero background, but the shape of the soliton is different at different times, so we guessed that the soliton influenced each other.

\section{Conclusions and discussions}

In this work, we have obtained multi-soliton solutions of the equations \eqref{Q1} according to the inverse scatting transform method. To start with, the Jost functions of spectrum problem \eqref{Q3} are derived and according to the result we have gotten the scatting data. Then integral kernels and their corresponding integral expressions have been given through translation transformation. It is easy to see that the relations of potential and integral kernels. Finally we have obtained multi-soliton solutions by GLM integral equations and the time evolution of scattering data. In addition, localized structures and dynamic behaviors of one-soliton and two-soliton solutions are illustrated vividly. It is hoped that our results can help enrich the nonlinear dynamics of the $N$-component nonlinear Schr\"{o}dinger type equations.

\section*{Acknowledgements}
This work was supported by  the Natural Science Foundation of Jiangsu Province under Grant No. BK20181351, the National Natural Science Foundation of China under Grant No. 11975306, the Six Talent Peaks Project in Jiangsu Province under Grant No. JY-059, the Qinglan Project of Jiangsu Province of China,  and the Fundamental Research Fund for the Central Universities under the Grant Nos. 2019ZDPY07 and 2019QNA35.

\end{document}